\documentclass[
superscriptaddress, 
 reprint,
 amsmath,amssymb,
 aps, pra
]{revtex4-2}

\usepackage{graphicx}
\usepackage{hyperref}
\usepackage[linesnumbered, ruled, vlined, titlenotnumbered]{algorithm2e}
\usepackage{amsmath, amsfonts, amssymb, amsthm, color}

\usepackage[capitalise]{cleveref}
\crefname{algocf}{alg.}{algs.}
\Crefname{algocf}{Algorithm}{Algorithms}

\newtheorem{proposition}{Proposition}

\newtheorem{lemma}{Lemma}

\newcommand{\ket}[1]{|#1\rangle}

\DeclareMathOperator{\CliNR}{CliNR}

\newcommand{\Prob}{{\mathbb{P}}}
\newcommand{\Expectation}{{\mathbb{E}}}

\newcommand{\Z}{{\mathbb{Z}}}

\newcommand{\depth}{\Delta}

\usepackage{xspace}
\newcommand{\rsp}{{RSP}\xspace}
\newcommand{\rsv}{{RSV}\xspace}

\newcommand{\RSPV}{{RSP\&V}\xspace}

\begin{document}

\title{Advantage in distributed quantum computing with slow interconnects}

\author{Evan E. Dobbs}
\affiliation{IonQ Inc.}
\affiliation{Aalto University}

\author{Nicolas Delfosse}
\affiliation{IonQ Inc.}

\author{Aharon Brodutch}
\affiliation{IonQ Inc.}
\date{\today}

\begin{abstract}
The main bottleneck for distributed quantum computing is the rate at which entanglement is produced between quantum processing units (QPUs).
In this work, we prove that multiple QPUs connected through slow interconnects can outperform a monolithic architecture made with a single QPU.
We consider a distributed quantum computing model with the following assumptions:
(1) each QPU is linked to only two other QPUs,
(2) each link produces only one Bell pair at a time,
(3) the time to generate a Bell pair is $\tau_e$ times longer than the gate time.
We propose a distributed version of the CliNR partial error correction scheme respecting these constraints and we show through circuit level simulations that, even if the entanglement generation time $\tau_e$ is up to five times longer than the gate time, distributed CliNR can achieve simultaneously a lower logical error rate and a shorter depth than both the direct implementation and the monolithic CliNR implementation of random Clifford circuits.
In the asymptotic regime, we relax assumption (2) and we prove that links producing $O(t/\ln t)$ Bell pairs in parallel, where $t$ is the number of QPUs, is sufficient to avoid stalling distributed CliNR, independently of the number of qubits per QPU.
This demonstrates the potential of distributed CliNR for near-term multi-QPU devices.
Moreover, we envision a distributed quantum superiority experiment based on the conjugated Clifford circuits of Bouland, Fitzsimons and Koh implemented with distributed CliNR.
\end{abstract}

\maketitle

\section{Introduction}
\label{sec:Introduction}

A quantum computer possessing millions of qubits might be necessary to reach large-scale applications~\cite{reiher2017elucidating, beverland2022assessing, dalzell2023quantum, gidney2025factor, zhou2025resource}.
The most promising path to building such a colossal device is to design a distributed quantum computer made with multiple interconnected quantum processing units (QPUs).
Many monolithic quantum computers made with a single QPU are available today but quantum interconnect technologies are not as mature~\cite{awschalom2021development}.

Quantum interconnects generating entanglement between different QPUs have been demonstrated experimentally~\cite{Moehring2007EntanglementDistance, Crain2019HighSpeedDetection, Maunz2009HeraldedGate, Ritter2012Elementary, Slodicka2013SinglePhotonEntanglement, Casabone2013CavityEntanglement, Hucul2015ModularEntanglement, Jing2019Entanglement, Mirhosseini2020Superconducting, stephenson2020high, Pompili2021Multinode, Krutyanskiy2023Entanglement230m, Knaut2024Entanglement, o2024fast, Oreilly2024Thesis, saha2025high, Meesala2024Quantum, main2025distributed} but record entanglement generation speeds remain far slower than the operation time of a QPU.
For example, recent experiments produce a Bell pair over two ions in different traps every 4ms in average in~\cite{o2024fast, Oreilly2024Thesis} and every 5.5ms in~\cite{stephenson2020high} whereas operation times in a trapped ion quantum computer typically range between tens of microseconds and 1ms.
Moreover, the need for entanglement distillation might further increase the entanglement generation time~\cite{bennett1996concentrating, bennett1996purification, bennett1996mixed, deutsch1996quantum, fujii2009entanglement, jansen2022enumerating, kang2023trapped, ramette2024fault, pattison2025constant, sinclair2025fault, marqversen2025fault, gu2025constant}.
This raises the following questions.

\medskip
{\em 
Is there any advantage in connecting multiple QPUs with slow quantum interconnects?
}
\medskip

To get some insight in the amount of entanglement required to perform a distributed quantum computation, consider the following scenario. Take $Tn$ qubits, distributed equally over $T$ QPUs and assume that a stream of random two-qubit gates must be executed. Such a random gate is acting on qubits from the same QPUs with probability $\frac{n-1}{nT} \rightarrow \frac{1}{T}$.
Therefore, to guarantee that remote two-qubit gates never stall the computation, one may need to generate the entanglement they consume $T$ times faster than the gate time of a single QPU.
Moreover, these two-qubit gates connect arbitrary pairs of QPUs which may require a fully connected topology for the QPU network.
This is far beyond current experimental results.

Another insight comes from circuit knitting techniques~\cite{piveteau2023circuit, lowe2023fast, harrow2025optimal}, see also~\cite{bravyi2016trading}. 
If entanglement generation is very slow then only a small number of remote two-qubit gates can be executed and circuit knitting techniques can be used to replace these remote gates by a moderate amount of classical computation together with error mitigation techniques on disconnected devices.
However, circuit knitting is limited to very slow interconnect because its cost grows exponentially with the number of remote gates.

Overall, it is unclear whether QPUs connected with slow interconnects can perform any task significantly better than disconnected devices.
To investigate this question, we introduce a model for distributed quantum computing with multiple QPUs connected as follows:
(1) each QPU is linked to only two other QPUs,
(2) each link produces only one Bell pair at a time,
(3) the time to generate a Bell pair is $\tau_e$ times longer than the gate time.
It is fair to refer to the model we study in this paper as {\em slow interconnects} because 
(1') we only use two links per QPU, that is only one more link than the minimum required to form a connected cluster of QPUs,
(2') we do not use any parallel links to generate multiple Bell pairs at once, even when the number of qubits per QPU grows,
(3') we focus on the regime where $1 \leq \tau_e \leq 15$ which means that Bell pair generation time is up to 15 times longer than the gate time.
Even with these pessimistic assumptions about the entanglement production speed, our numerical simulations demonstrate an advantage with a distributed architecture.

We consider the CliNR scheme which is a partial error correction scheme designed to reduce the noise rate of Clifford circuits at the price of a smaller overhead than quantum error correction~\cite{delfosse2025low}.
The Clifford circuit we want to implement is split into subcircuits and each subcircuit is implemented via state injection.
The state injection consumes a resource state that can prepared and verified offline and which can be discarded and re-prepared if an error is detected during the verification.

In this work, we propose a distributed version of CliNR where the resource state corresponding to each subcircuit is prepared and verified on a different QPU.
This task can be accomplished in parallel without consuming any Bell states.
We design a state injection procedure that can be implemented in a circular network of QPUs respecting the assumptions (1), (2) and (3) above.
The parallelization of the resource state preparation and verification helps in reducing the depth of the CliNR implementation and the accumulation of noise on idle qubits during all these resource state preparations.
However, if entanglement generation is too slow, the subsequent state injection stages, which consume entanglement, might be stalled, canceling the previous gains in depth and noise.
The structural reason making distributed CliNR well suited to a distributed architecture with slow interconnects is that the state injection only consumes $n$ Bell pairs per link of the circular network.
Therefore, a large fraction of the required Bell pairs, if not all, can be prepared during the resource state preparation and verification, which avoids stalling the state injection.

Our numerical simulation proves that distributed CliNR can achieve simultaneously a lower logical error rate and a shorter depth than both the direct implementation and the monolithic CliNR implementation of random Clifford circuits.
This simulation is performed with a $85$-qubit random Clifford circuits distributed over four modules.

In the asymptotic regime we relax assumption (2) by allowing parallel links producing multiple Bell states simultaneously between two connected QPUs.
In this asymptotic regime, we prove that for Clifford circuits distributed in a uniform way (see \cref{sec:Theory result}) over $T$ QPUs, a circular network with $O(T/\ln T)$ parallel links per connection is sufficient for implementing distributed CliNR without any delay due to unavailable entanglement.

The rest of this paper is organized as follows. \cref{sec:Architecture model} describes our distributed quantum computing model. \cref{sec:Distributed CliNR} reviews $\CliNR$ and introduces distributed $\CliNR$. Our numerical results are presented in \cref{sec:simulation} and \cref{sec:Theory result} establishes asymptotic bounds.

\section{Monolithic and distributed architectures}
\label{sec:Architecture model}

The {\em monolithic architecture} describes a single QPU with fully connected and maximally parallel qubits. 
It is a register of qubits equipped with the following operations: preparation of a qubit in the state $\ket{0}$, arbitrary single-qubit rotations, two-qubit gates $CX$, $CY$ and $CZ$, and single-qubit Pauli measurements, where two-qubit gates are available between all pairs of qubits and any set of non-overlapping operations can be implemented simultaneously.

A {\em distributed architecture} includes multiple QPUs connected through  links producing entanglement.
In this work, we restrict ourselves to a circular topology for the QPU network because it requires a few links.
We consider $T$ QPUs, denoted $Q_k$, indexed by $k \in \Z_T$, where $Q_k$ is connected to $Q_{k-1}$ and $Q_{k+1}$ by a link.
Each QPU comprises of three different modules described below: an \emph {entanglement generation module}, a \emph{storage module} and a \emph{compute module} and two {\em  links} as represented in \cref{fig:qpu}.

\begin{figure}
    \centering   
            \includegraphics[width=0.7\linewidth]{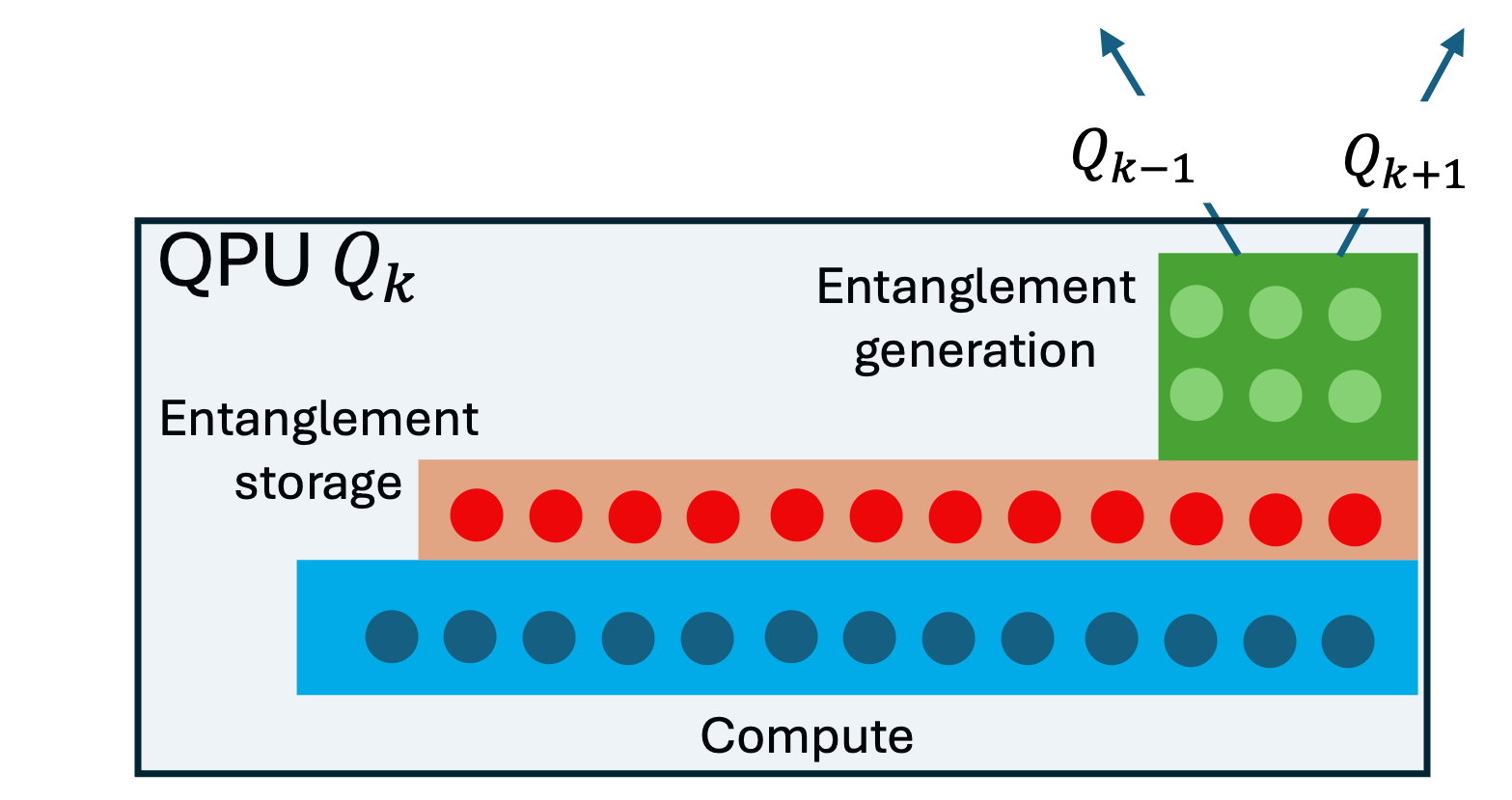}
    \caption{The three modules of a QPU $Q_k$ in a distributed architecture connected to two other QPU $Q_{k\pm1}$.
    }
    \label{fig:qpu}
\end{figure}

The {\em entanglement generation module} is used to generate Bell pairs between connected QPUs.
Depending on the quality of the links, these Bell pairs might be raw Bell states or distilled Bell states.
We omit the details of the entanglement generation and distillation protocols and we assume that one can produce Bell pairs with sufficiently high quality on any pair of connected modules in depth $\tau_e$.
The remaining noise in these Bell pairs is captured in our noise model in~\cref{sec:simulation} by increasing the noise rate of two-qubit gates consuming Bell pairs.
Moreover, we assume that one can produce Bell pairs supported on distinct pairs of modules simultaneously, but only one at a time over the same connection.
Therefore, the total number of Bell states produced in a circular QPU network with $T$ QPUs is limited to at most $T$ Bell pairs every $\tau_e$ time steps.

Once a Bell pair is generated, we assume that it is stored in the {\em storage module}, which consists of $n_s$ \emph{storage qubits}.
The storage qubits are used to store half of an EPR pair, the other half being stored in the storage module of a neighboring QPU.
We assume that they are long-lived~\cite{wang2020single}, which translates in negligible idle noise rate in the noise models considered in \cref{sec:simulation}. They are also simpler in the sense that they do not need to interact with each-other and have limited interactions with qubits in the compute module. 

The {\em compute module} has $n_c$ fully connected, maximally parallel qubits, and it is where the core of the computation takes place.
To keep the model simple, we assume that both the storage module and the compute module are maximally parallel. 

The storage qubits are used exclusively to implement a {\em remote gate}, that is, either a $CX$ gate or teleportation, between qubits in the compute modules of two different QPUs.
Each remote gate consumes one Bell pair shared between the storage modules of these two QPUs. For this purpose, the storage qubits can interact with compute qubits, and can be measured. 
If no such Bell pair is available in the storage modules, the remote gate is delayed until one is generated.

\section{Monolithic and Distributed CliNR}
\label{sec:Distributed CliNR}

This section reviews the CliNR scheme~\cite{delfosse2025low, tham2025optimized, brodutch2025recursive} and proposes a distributed CliNR implementation.

\subsection{Review of CliNR}
\label{subsec:review_of_clinr}

CliNR is a technique for reducing logical error rates in quantum circuits by performing  Clifford subcircuits offline and using stabilizer measurements to check for errors.  The subcircuits are injected to the main circuit if no errors are detected, otherwise the process is repeated until success.

\begin{figure}
    \centering
    (a) 
    
    \includegraphics[width=0.9\linewidth]{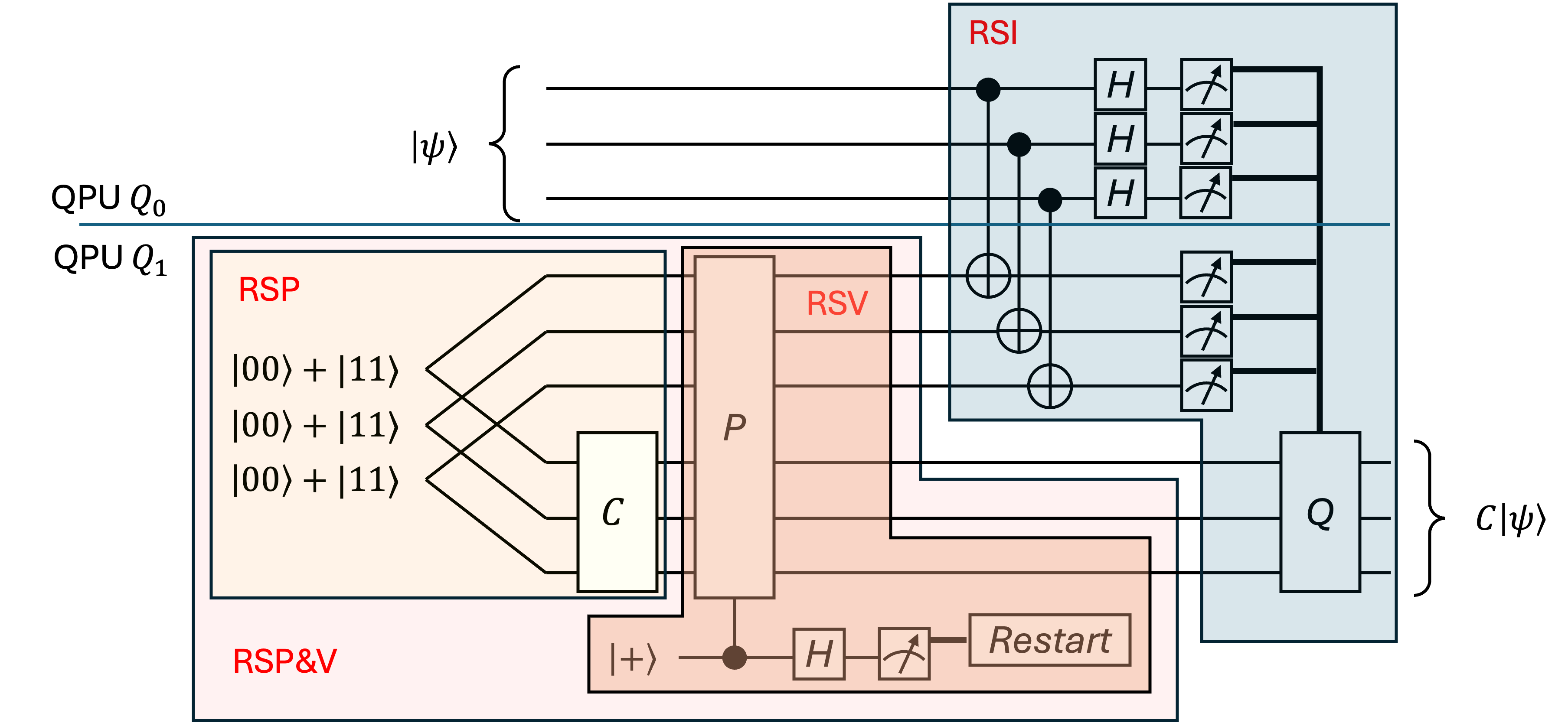}

    \vspace{10pt}

    (b) 
    
    \includegraphics[width=0.8\linewidth]{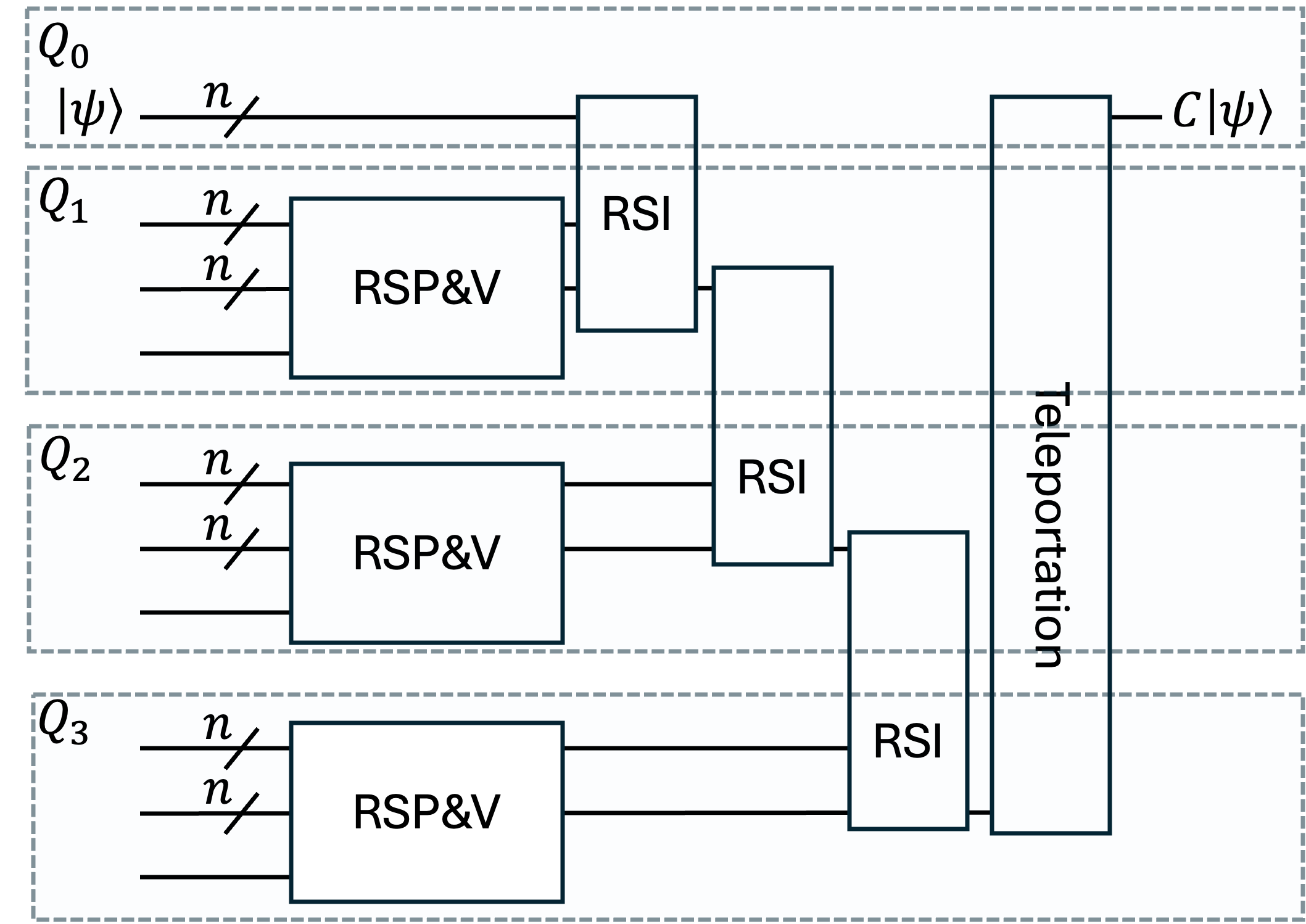}

    \caption{Distributed $\CliNR$. (a) The three components of $\CliNR$: RSP, RSV, and RSI. On a monolithic architecture, all three would be executed on a single $3n+1$ qubit QPU.  In the distributed architecture, RSP and RSV (\RSPV) would be executed on one QPU   requiring $2n+1$ computational qubits and RSI would require remote gates. (b) Distributed $\CliNR$ with $t=3$ (4 QPUs). The \RSPV stages which are expected to have a relatively large depth are executed in parallel. For simplicity, qubit rails are terminated when they are no longer used.}
    \label{fig:clinr}
\end{figure}

The $\CliNR_{1,r}(C)$ implementation of an $n$-qubit Clifford circuit $C$ of depth $\depth$ follows three stages (see \cref{fig:clinr}):
\begin{enumerate}
    \item {\bf Resource state preparation (RSP)}. Preparation of a $2n$-qubit resource state $\ket{\varphi}$. Starting with the preparation of  $2n$ Bell pairs and continuing with the implementation of $C$ on $n$ qubits, one from each Bell pair. The depth of RSP is $\depth+3$ where the term $3$ comes from the Bell state preparations.

    \item {\bf Resource state verification (RSV)}. The resource state is verified using  $r$ stabilizer measurements.  If one of these measurements returns a non-trivial outcome, the verification fails and stops. Then, the resource state is discarded and the process restarts from RSP. 
    The depth of RSV depends on the number of restarts triggered by the stabilizer measurements.  
    
    \item {\bf Resource state injection (RSI)}. If no errors are detected during RSV, the subcircuit $C$ is applied to the $n$-qubit input state $\ket{\psi}$ by consuming the verified resource state $\ket{\varphi}$. This can be done in depth 4 as shown in \cref{fig:clinr}(a).    
\end{enumerate}

If $C$ is large, then the expected number of restarts might be large. In such cases, it is advantageous to break $C$ into $t$ subcircuits $C_k$ and implement $\CliNR_{1,r}(C_k)$ on each one individually. These can be chained to provide an implementation of $C$ which we denote $\CliNR_{t,r}(C)$.

\subsection{Distributed CliNR}
\label{subsec:distributed_clinr}

Here, we propose a distributed version of the $\CliNR$ scheme (see \cref{fig:clinr} (b)). As with standard $\CliNR$, we consider implementing a $n$-qubit Clifford circuit $C$ on an input state $\ket{\psi}$. The implementation of distributed $\CliNR_{t,r}$ requires $t+1$ QPUs denoted $Q_k$ with $n_c=2n+1$ compute qubits and $n_s=2n$ storage qubits each. It begins with dividing $C$ into $t$ subcircuits $C_k$ of equal depth (as with monolithic $\CliNR_{t,r}$).  Then, starting at the beginning of the circuit, we do the following:

\begin{enumerate}
    \item {\bf Parallel \RSPV}. For each $C_k$, implement the \RSPV blocks of $\CliNR_{1,r}(C_k)$ on $Q_k$.  

    \item {\bf Serial injection}. We say that  $Q_k$, is \emph{ready for injection} if the following two conditions are met: 
      \begin{enumerate}
        \item The implementation of \RSPV completed on $Q_k$. 
        \item If $k>1$, RSI from $Q_{k-2}$ to $Q_{k-1}$ completed. 
    \end{enumerate}
    When a QPU is ready for injection we inject $C_k$ to the state $\prod_{\ell = 1}^{k-1}C_\ell\ket{\psi}$ on $Q_{k-1}$ by consuming the resource state as in the RSI step of the monolithic $\CliNR$ scheme. However, the $CX$ are now remote $CX$ gates that consume a Bell state.  Note that the qubits storing the new state $\prod_{\ell = 1}^{k}C_\ell\ket{\psi}$ are now in~$Q_k$.  
    
    If the number of stored Bell states is not sufficient to complete all the remote operations, it can start with the available Bell pairs and consume the remaining Bell pairs as they become available. RSI is complete only when all $n$ Bell pairs have been consumed. 

    If all Bell pairs are available when RSI on $Q_k$ starts, then the process has depth $4$. If fewer Bell pairs are available at the start, then the process has depth up to $n\tau_e+4$. 

    \item{\bf Teleportation}.  Teleport the $n$-qubit state $C\ket{\psi}$ from $Q_t$ to $Q_0$. 
\end{enumerate}

\subsection{Three implementations of a Clifford circuit}
\label{subsec:three_implementations}

In what follows, we consider three implementations of a $n$-qubit Clifford circuit with size $s$.
The {\em direct implementation} executes the $s$ gates of the circuit on a monolithic architecture with a $n$-qubit QPU.
The {\em monolithic $\CliNR_{t, r}$ implementation} uses a single QPU with $3n+1$ qubits.
The {\em distributed $\CliNR_{t, r}$ implementation} uses a distributed architecture with $t+1$ QPUs with $2n+1$ compute qubits and $2n$ storage qubits each.

\section{Simulation results}
\label{sec:simulation}

\begin{figure} 
    \centering
    (a)
    
    \includegraphics[width=0.99\linewidth]{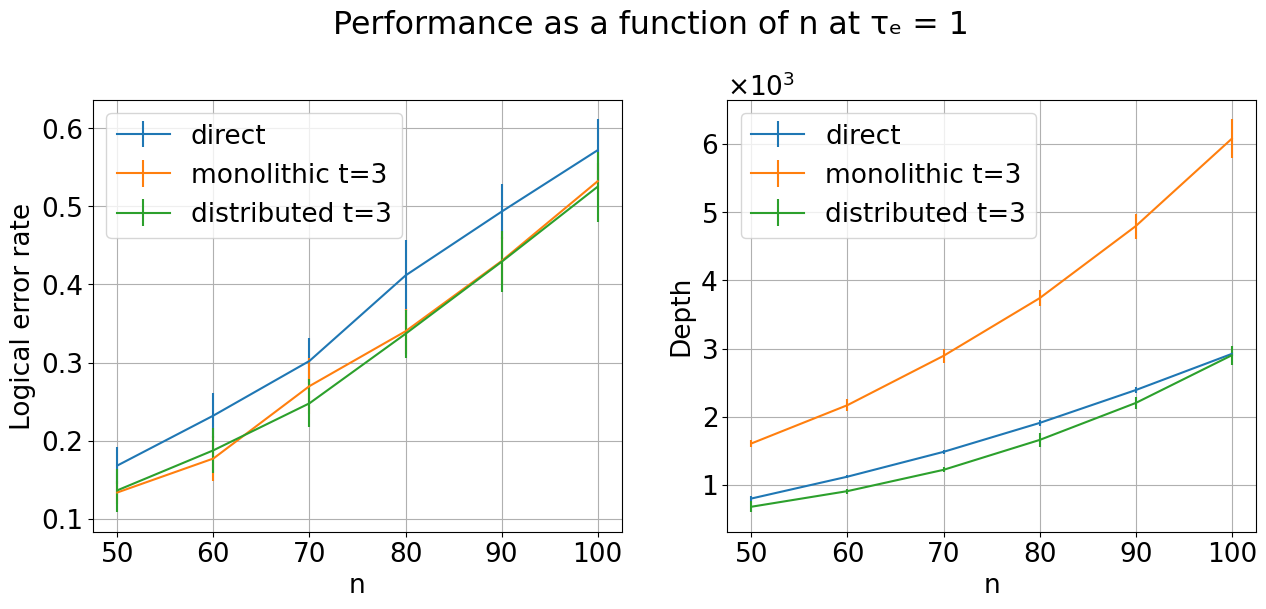}

    \vspace{10pt}
    
    (b)
        \includegraphics[width=0.99\linewidth]{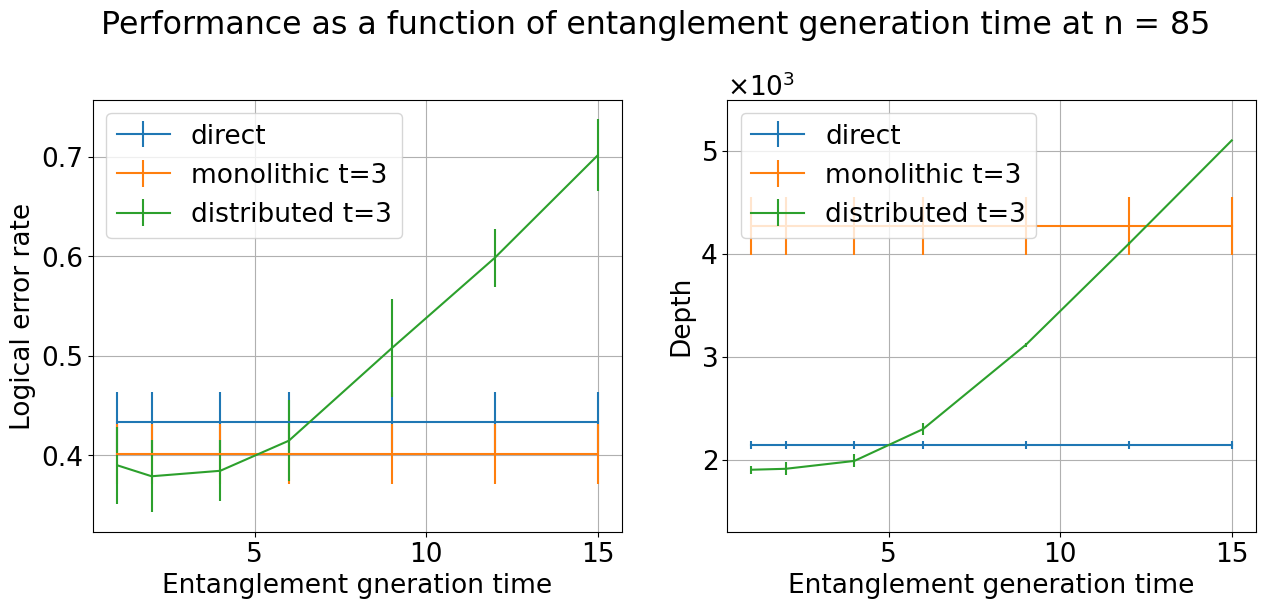}

    \caption{Simulation results for the three different implementation, direct, monolithic $\CliNR_{3,3}$ and distributed $\CliNR_{3,3}$. (a) Logical error rate (left) and depth (right) as a function of $n$. (b) Logical error rate (left) and depth (right) as a function of entanglement generation time at $n=85$.}
    \label{fig:simulationresults}
\end{figure}

Here, we present numerical results comparing the direct implementation with the monolithic and distributed CliNR implementations of random Clifford circuits.
The circuit is split into three subcircuits with equal depth $\pm1$ and we simulate the monolithic and distributed $\CliNR_{3,3}$ implementations.

Noise is applied to gates, measurements and idle qubits. We use a circuit level noise model where each operation is followed by a depolarizing channel applied to the support of the operation. The noise rate for intra-QPU one-qubit and two-qubit operations is $p/10$ and $p$ respectively.   
Remote two-qubit gates have a noise rate of $3p$ to account for the consumption of a Bell pair. That means that each intra-QPU two qubit gate, for example, is followed by the identity with probability $1-p$ or one of the 15 non-trivial Pauli errors with probability $p/15$ each. Furthermore, idle qubits in the compute module experience a depolarizing channel with noise rate $p/100$ (per depth 1 layer). Classical measurement outcomes are flipped with probability $p/10$.  For deriving bounds on depth (\cref{sec:Theory result}) we use a uniform noise model where all of the above noise rates are $p$ and there is no idle noise. 

We consider the regime of $n=50$ to $n=100$, with particular focus on $n=85$ where a monolithic $\CliNR$ implementation requires 256 qubits, and we set the noise parameter to $p=10^{-4}$, which was demonstrated recently in ion trap quantum computers \cite{hughes2025trapped}.

We furthermore choose $C$ to be a random Clifford with size $n^2$ (achieved by either truncating or padding with random Clifford gates as in \cite{brodutch2025recursive}). Each data point with fixed $n$ and $\tau_e$ in \cref{fig:simulationresults} is produced by running a Stim \cite{Gidney_Stim_a_fast_2021} simulation with 20 different circuits  $C$, and 200 samples are taken with each circuit to produce logical error rate and average depth. Error bars are taken as the standard deviation over the 20 different Clifford circuits. 

In \cref{fig:simulationresults} (a), we fix $\tau_e = 1$ and vary the number of qubits from $n=50$ to $n=100$. 
We observe that distributed $\CliNR$ reduces the logical error rate as much as the monolithic $\CliNR$ but it does so with a significantly reduced depth. Indeed, the depth of distributed $\CliNR$ is even lower than the depth of the direct implementation.
We expect that larger values of $t$ further increase the depth advantage of distributed CliNR, in particular in the regime of large $n$.

In \cref{fig:simulationresults} (b), we keep $n=85$ and vary over $\tau_e$. For all $\tau_e \leq 5$, we see that distributed $\CliNR$ has lower depth and better logical error rates than both the direct implementation and monolithic $\CliNR$. As $\tau_e$ grows, the impact of generating entanglement grows, increasing both the depth and the logical error rate.

\section{Asymptotic requirements for entanglement production}
\label{sec:Theory result}

In this section, we establish bounds on the depth of monolithic and distributed CliNR and we prove that a small number of parallel links is sufficient to avoid stalling distributed CliNR when the required Bell pairs are not available in time.

Consider a Clifford circuit $C$ split into $t$ subcircuits $C_i$ with $i=1,\dots,t$.
Denote by $\delta_i$ the depth of the \RSPV for $C_i$ in the absence of restart and let $q_i$ be the restart probability for this subcircuit.
Note that $q_i$ is the same in both the monolithic and the distributed CliNR implementation because the \RSPV subroutine is performed in the exact same way in both implementations.

Throughout this section, we simplify the protocol by assuming that the \rsv does not trigger a restart as soon as a non-trivial stabilizer measurement is observed but only at the end of the $r$ stabilizer measurements.
This makes the proof simpler without significantly increasing the circuit depth because $r$ is typically small.

\begin{proposition} [monolithic CliNR depth]
The expected depth of the monolithic CliNR implementation of $C$ is 
\begin{equation}
    \sum_{i=1}^t \frac{\delta_i}{1-q_i} + 4t \cdot
\end{equation}
\end{proposition}

\begin{proof}
The expected depth of \RSPV for subcircuit $C_i$ (including restarts) is given by 
\begin{equation}
    \sum_{j \geq 0} j \delta_i q_i^{j-1} (1-q_i)
    = \frac{\delta_i}{1-q_i} \cdot
\end{equation}
This leads to the result by linearity of the expectation and adding the contribution $4t$ for the $t$ rounds of RSI.
\end{proof}

\begin{proposition} [distributed CliNR depth]
The expected depth of the distributed CliNR implementation of $C$ is upper bounded by
\begin{equation}
    \max\left(
        \frac{\max_i(\delta_i) (\ln t + 3)}{\min_i(1-q_i)}, 
        n\tau_e
    \right) 
    + 4t
\end{equation}
\end{proposition}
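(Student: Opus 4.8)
The plan is to track, for each QPU, the random time at which it becomes able to perform its injection, and to propagate these times through the serial injection chain by a max-plus recursion. Let $S_k$ denote the (random) time at which \RSPV{} finishes on $Q_k$, and let $E_k$ denote the time at which the injection $\mathrm{RSI}_k$ (injecting $C_k$, producing the state on $Q_k$) completes, with $E_0=0$. The two readiness conditions say that $\mathrm{RSI}_k$ can start only once $\RSPV$ on $Q_k$ has finished (time $S_k$) and the previous injection $\mathrm{RSI}_{k-1}$ has completed (time $E_{k-1}$), so its start time is $\max(S_k,E_{k-1})$. Since a single injection finishes within depth $4$ once its $n$ Bell pairs are available, and since link $k$ produces one pair every $\tau_e$ steps starting from time $0$ (so all $n$ pairs are guaranteed by time $n\tau_e$ regardless of readiness), I would establish the pathwise recursion $E_k \le \max\bigl(\max(S_k,E_{k-1}),\, n\tau_e\bigr) + 4$. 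The appearance of $n\tau_e$ is the formal counterpart of the claim that entanglement is generated "for free" during the long $\RSPV$ stages.

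Unrolling this recursion by induction on $k$ gives the pathwise bound $E_t \le \max\bigl(\max_k S_k,\, n\tau_e\bigr) + 4t$, where $4t$ accumulates the depth-$4$ cost of the $t$ injections and the leading term is whichever of the parallel $\RSPV$ completion time or the entanglement-generation time dominates. The final teleportation from $Q_t$ to $Q_0$ contributes only an additive constant, which I would absorb. This reduces the problem to bounding $\Expectation[\max_k S_k]$.

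For that, write $S_k = N_k\,\delta_k$, where $N_k$ is the number of $\RSPV$ attempts on $Q_k$, a geometric variable with $\Prob(N_k \ge j) = q_k^{\,j-1}$. Decoupling via $\max_k S_k \le (\max_k \delta_k)(\max_k N_k)$ leaves me to control $\Expectation[\max_k N_k]$. Using $\Expectation[\max_k N_k] = \sum_{j\ge1}\Prob(\max_k N_k \ge j)$ with the union bound $\Prob(\max_k N_k \ge j) \le \min\bigl(1,\ t\,q_{\max}^{\,j-1}\bigr)$, where $q_{\max}=\max_k q_k = 1-\min_k(1-q_k)$, I would split the sum at the threshold $j^\star-1=\ln t/\ln(1/q_{\max})$ where $t\,q_{\max}^{\,j-1}\approx 1$. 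The head contributes at most $j^\star$ terms and the geometric tail contributes $O\bigl(1/\min_k(1-q_k)\bigr)$; combined with $\ln(1/q_{\max}) \ge \min_k(1-q_k)$ this yields $\Expectation[\max_k N_k] \le (\ln t + 3)/\min_k(1-q_k)$, hence $\Expectation[\max_k S_k] \le \max_i(\delta_i)(\ln t + 3)/\min_i(1-q_i)$. This max-of-geometrics estimate is exactly where the factor $\ln t + 3$ originates and is the quantitative heart of the argument.

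The step I expect to be the main obstacle is the final combination: passing from $\Expectation\bigl[\max(\max_k S_k, n\tau_e)\bigr] + 4t$ to the stated $\max\bigl(\Expectation[\max_k S_k], n\tau_e\bigr) + 4t$. Because $x \mapsto \max(x, n\tau_e)$ is convex, Jensen's inequality runs the wrong way, so the expectation cannot simply be pulled inside the maximum; the fully honest bound is $\Expectation[\max_k S_k] + n\tau_e + 4t$. I would therefore either (i) present the clean $\max(\cdot,\cdot)$ form as capturing the two bottleneck regimes, with the deterministic surrogate $\max_i(\delta_i)(\ln t+3)/\min_i(1-q_i)$ standing in for $\max_k S_k$, or (ii) argue within the regime where one term dominates (e.g.\ $n\tau_e$ small relative to the $\RSPV$ time, so the entanglement term never controls the depth). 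Preserving the tight $\max$ form while keeping this combination rigorous is the delicate point.
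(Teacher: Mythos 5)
Your proposal is correct and follows essentially the same route as the paper: a max-of-geometrics bound on the parallel \RSPV{} stage, an additive $4t$ for the serial injection and teleportation stage, and $n\tau_e$ entering because entanglement generation runs concurrently from time zero. In particular, your estimate of $\Expectation[\max_k N_k]$ via a union bound and a tail-sum split is exactly the paper's technical lemma (a coupon-collector variant), with the same constant $(\ln t + 3)/\min_i(1-q_i)$, and your max-plus recursion is a more explicit rendering of the paper's informal accounting of the injection stage (the paper counts $t-1$ remote injections plus the final teleportation to get $4t$, which matches your recursion with one fewer injection step).

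The ``delicate point'' you flag is genuine, and you should know that the paper does not resolve it: its proof ends by asserting that generating the Bell pairs ``takes depth $n\tau_e$, which leads to the result'', i.e.\ it silently replaces $\Expectation[\max(\max_k S_k,\, n\tau_e)]$ by $\max(\Expectation[\max_k S_k],\, n\tau_e)$, which is precisely the step that Jensen's inequality runs against. Your write-up is more careful on this point, and there is a clean repair that stays within the paper's own computation: the tail-sum argument of the lemma in fact bounds $\Expectation[\max(T, s+1)]$ with $s = \lfloor \ln t/p_0\rfloor$, because the head of the split sum counts the first $s+1$ tail probabilities as $1$ regardless of their value. Hence, whenever $n\tau_e \leq \max_i(\delta_i)\,\bigl(\lfloor \ln t/\min_i(1-q_i)\rfloor + 1\bigr)$, one obtains $\Expectation[\max(\max_k S_k,\, n\tau_e)] \leq \max_i(\delta_i)(\ln t+3)/\min_i(1-q_i)$ rigorously, which gives the stated bound; in the complementary regime the same computation yields $n\tau_e + O\bigl(\max_i(\delta_i)/\min_i(1-q_i)\bigr)$, so the stated bound holds only up to a lower-order additive term. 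Your option (ii), or stating the proposition with that additive correction, is the honest fix.
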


\begin{proof}
The parallel \RSPV terminates once the \RSPV subroutine succeeds for all the subcircuits $C_i$.
Based on \cref{lemma:expected_stopping_time}, the expected number of \rsp per subcircuit before all the $C_i$ pass the verification is upper bounded by
\begin{align}
    \frac{\ln t + 3}{\min_i(1-q_i)} \cdot
\end{align}
Each attempt contributes to the depth by at most $\max_i(\delta_i)$.
Therefore, the expected depth of the parallel \RSPV is upper bounded by 
\begin{align}
    \frac{\max_i(\delta_i) (\ln t + 3)}{\min_i(1-q_i)} \cdot
\end{align}
The $t-1$ rounds of RSI and the final teleportation add $4t$ to the depth.
However, for these to be executed, $n$ Bell pairs must be available for each pair of QPUs at the end of the parallel \RSPV.
Generating these pairs takes depth $n \tau_e$, which leads to the result.
\end{proof}

If all subcircuits have the same depth $\delta = \delta_i$ and the same restart probability $q=q_i$, the monolithic CliNR depth is 
\begin{align}  \label{eq:depth mono}
\frac{t \delta}{1-q} + 4t
\end{align}
and the distributed CliNR depth is at most
\begin{align} \label{eq:dist_bound}
\max\left(
    \frac{\delta}{1-q} (\ln t + 3), 
    n \tau_e
\right)
+ 4t \cdot
\end{align}
Then, if
\begin{equation} \label{eq:entanglement_generation_requirement}
n \tau_e \leq \frac{\delta}{1-q} (\ln t + 3)
\end{equation}
the entanglement production is fast enough not to incur any increase of the average runtime of distributed CliNR.
In this regime, comparing the first terms in \cref{eq:depth mono} and \cref{eq:dist_bound}, we see that the \RSPV depth drops from $O(t)$ to $O(\ln t)$, showing that the depth of distributed $\CliNR$  can be shorter than monolithic $\CliNR$.

For \cref{eq:entanglement_generation_requirement} to be satisfied
for linear-depth $n$-qubit circuits with $n \rightarrow +\infty$, it suffices to have $\tau_e \geq A \frac{\ln t}{t}$ for some constant $A$. This can be achieved by relaxing assumption (2) and 
using  $O\left(\frac{t}{\ln t}\right)$ links producing entanglement in parallel.
Remarkably, this number does not depend on the number of qubits inside each module.

\section{Conclusions}
\label{sec:Conclusions}

We identify a promising application to demonstrate the capability of a distributed quantum computer.
The fact that this application works with slow interconnects makes it a promising candidate for a near-term experiment.
Current proposals could lead to further speed up of entanglement generation through memory-enhancement~\cite{bhaskar2020experimental}, by increasing the photon collection rate using a cavity~\cite{cirac1997quantum, stute2012tunable, casabone2013heralded, krutyanskiy2019light, krutyanskiy2023entanglement, walker2021single} or through electron juggling~\cite{moore2025electron} and may achieve high fidelity without distillation~\cite{saha2025high} which could further increase the gap between distributed CliNR and its monolithic counterparts.

Moreover, our distributed CliNR scheme can be applied to implement the Clifford part of the conjugated Clifford circuit proposed in \cite{bouland2017complexity}.
This is a potential avenue for a demonstration of quantum superiority in a distributed setting.
Moreover, the error correction capability of CliNR can potentially increase the output fidelity of the experiment increasing the gap with the classical simulation.

One can consider other ways to reduce the CliNR depth. First, one could break the circuit into two subcircuits and implement them on both sides of the entangled pair during RSP, which could reduce the depth of \RSPV by a factor up to $2$. A second method is to run $m$ instances of $\CliNR_{1,r}(C)$ in parallel and inject the first one that succeeds. This can reduce the depth by a factor up to $1/m$.

\section{Acknowledgment}
The authors thank Johannes Borregaard, Daniel Moore, Edwin Tham, Felix Tripier, Min Ye, John Gamble and the whole IonQ team for insightful discussions.

\bibliography{references}

\appendix

\section{Technical lemma}

The following lemma, which is a variant to the coupon collector problem, is used to establish an upper bound on the expected time for the parallel \RSPV in distributed CliNR.

\begin{lemma} \label{lemma:expected_stopping_time}
Let $X_1, \dots, X_t$ be $t$ Bernoulli random variables such that $\Prob(X_i=1) = p_i$.
Let $T_i$ be the occurrence of the first sample with value $1$ from $X_i$ and let $T = \max_i(T_i)$.
Then, we have
\begin{equation}
    \Expectation(T) \leq \frac{\ln t + 3}{p_0}
\end{equation}
where $p_0 := \min_i(p_i)$.
\end{lemma}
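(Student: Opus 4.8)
The plan is to treat each $T_i$ as a geometric random variable and to control the maximum through the standard tail-sum identity together with a union bound, in the spirit of the coupon-collector analysis. Assuming (as is implicit in the statement) that the sampling streams generating the different $X_i$ are mutually independent, each $T_i$ satisfies $\Prob(T_i > m) = (1-p_i)^m$, and since $T$ takes positive integer values I would start from the identity $\Expectation(T) = \sum_{m \geq 0} \Prob(T > m)$. I will also assume $p_0 > 0$, the bound being vacuous otherwise.

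The key step is to estimate the summand. By independence, $\Prob(T > m) = 1 - \prod_{i=1}^t \bigl(1 - (1-p_i)^m\bigr)$, and the elementary inequality $1 - \prod_i (1-a_i) \leq \sum_i a_i$ applied with $a_i = (1-p_i)^m \leq (1-p_0)^m$ yields the union-type bound $\Prob(T > m) \leq t(1-p_0)^m$. Since $\Prob(T > m) \leq 1$ always holds, I would keep both estimates available and use whichever is sharper.

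Next, I would split the tail sum at a cutoff $m^* = \lceil (\ln t)/p_0 \rceil$, bounding $\Prob(T > m)$ by $1$ for $m < m^*$ and by $t(1-p_0)^m$ for $m \geq m^*$. The first block contributes exactly $m^*$. For the second, the inequality $1 - p_0 \leq e^{-p_0}$ gives $t(1-p_0)^{m^*} \leq t\,e^{-\ln t} = 1$, so the geometric tail sums to $\sum_{m \geq m^*} t(1-p_0)^m = t(1-p_0)^{m^*}/p_0 \leq 1/p_0$. Combining the two blocks gives $\Expectation(T) \leq m^* + 1/p_0 \leq (\ln t)/p_0 + 1 + 1/p_0$, and using $p_0 \leq 1$ to absorb the additive $1$ into $1/p_0$ produces the claimed bound (in fact with constant $2$ rather than $3$, so the stated estimate holds with slack).

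I expect the only genuine subtlety to be the choice of cutoff $m^*$ and the verification that the two blocks balance to the stated constant; the union bound and the geometric-tail summation are routine. I would also make the independence hypothesis explicit at the outset, since factoring $\Prob(T > m)$ into a product over $i$ is exactly the place where it is used.
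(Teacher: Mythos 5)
Your proof is correct and follows essentially the same route as the paper's: the tail-sum identity, a union-type bound $\Prob(T>m)\leq t(1-p_0)^m$, and a split of the sum at $m^*\approx(\ln t)/p_0$; your sharper constant $2$ comes only from summing the exact geometric series $(1-p_0)^m$ instead of first relaxing it to $e^{-mp_0}$ as the paper does, which costs the paper a tail factor $1/(1-e^{-p_0})\leq 2/p_0$. One small correction: independence is not actually needed anywhere --- rather than factoring $\Prob(T>m)$ into a product and then undoing the factorization via $1-\prod_i(1-a_i)\leq\sum_i a_i$, you can invoke the union bound $\Prob(T>m)\leq\sum_{i=1}^t\Prob(T_i>m)$ directly, which holds for arbitrarily dependent streams, so the lemma (and your proof, so amended) requires no independence hypothesis at all.
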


\begin{proof}
For each $T_i$, we have
\begin{equation}
    \Prob(T_i > k) = (1-p_i)^k \leq e^{-kp_i} \leq e^{-kp_0} \cdot
\end{equation}
Therefore, a union bound yields
\begin{equation}
    \Prob(T > k) \leq \sum_{i=1}^t \Prob(T_i > k) \leq t e^{-kp_0} \cdot
\end{equation}
To bound the expectation of $T$, we use the tail sum formula and we split the sum at $s := \lfloor \ln t/p_0 \rfloor$ which gives
\begin{align}
    \Expectation(T) 
    & = \sum_{k \geq 0} \Prob(T > k) \\
    & \leq \sum_{k = 0}^{s} 1 
        + \sum_{k > s} \Prob(T > k) \\
    & \leq \frac{\ln t}{p_0} + 1 + \frac{1}{1-e^{-p_0}} \cdot
\end{align}
The last term is upper bounded by $2/p_0$ which gives
$
\Expectation(T) \leq \frac{\ln t + 3}{p_0}
$, proving the lemma.
\end{proof}

\end{document}